\def\issn{{\sc \textbf{ISSN: }} 1930-1235; }
\def\issueyear{\textbf{2016}}
\title{Graph based linear error correcting codes}
\keywords{LDPC codes  \and  sparse graph \and graph based algorithm \and quality of transmission}
\subjclass[2000]{ 94B05  \and 	94B05  \and  	05C50  }
\newtheorem{theorem}{Theorem}[section]
\newtheorem{proposition}[theorem]{Proposition}
\begin{document}

\maketitle

\hrule

\vspace{.5cm}

\begin{center}

\sc{Monika Polak} 

\smallskip

\textit{University of Maria Curie--Sklodowska\\
Lublin, Poland \\
Email: monika.katarzyna.polak@gmail.com 
}

\bigskip

\sc{Eustrat Zhupa}

\smallskip

\textit{University of Maria Curie--Sklodowska\\
Lublin, Poland \\
Email: e.zhupa@gmail.com
}

\end{center}

\vspace{.3cm}

\hrule

\vspace{.2cm}

\begin{abstract}
In this article we present a construction of error correcting codes, that have representation as very sparse matrices and belong to the class of Low Density Parity Check Codes. LDPC codes are in the classical Hamming metric. They are very 
close
to well known Shannon bound. The ability to use graphs for code construction was first discussed by 
Tanner in 1981 and has been used in a number of very effective implementations.  We describe how to  construct such codes by using special a family of graphs introduced by Ustimenko and Woldar. Graphs that we used
are bipartite, bi-regular, very sparse and do not have short cycles $C_4$. Due to the very low density of such 
graphs, the obtained codes are fast decodable. We describe how to choose parameters to obtain a desired code rate. We also show results of computer simulations of BER (bit error rate) of the obtained codes in order to compare them with other known LDPC codes.
\end{abstract}

\section{Introduction}

All information in a computer is represented as a sequence of binary digits.
Data are stored and shared with others. Both in the case of data storage and during data transfer, we need protection against transmission errors or data loss. 
In the first scenario, unreliable or faulty hardware (computer memories, compact discs, QR Code) can seriously corrupt data. Coding techniques are one of the important measures for improving reliability.

As to the second scenario, very often digital data are sent over unreliable communication channels (air, a telephone line, a beam of light or a cable). Because of the channel, noise errors may be introduced during transmission from the source to the receiver. It is very important for the recipient to receive the correct message as intended. 
In order to minimize the number of errors during transmission, we can use error correcting codes.

 Coding of information with the use of error correcting codes
consists of adding to sequences of $K$ elements some extra bits in a
certain way. Such additional bits don't carry any information and they only have
check purposes. An error correcting code is $A  \subset \mathbb F^N_2$ , where
$\mathbb F_2 = \{0, 1\}$ and codewords follow the classical Hamming metric:
\[d(x, y) = |\{i : x_i \neq y_i \}|.\]
We denote with $[N, K]$ the code with code words length $N$ and
$K$ information bits. In such a code there are $R = N - K$ parity
check equations. The ratio $K/N$ is called \emph{code rate} and is denoted by $R_C$. It is
interesting to look at codes with the best correction properties and
the biggest code rate for cost reasons.

Linear error correcting code can be represented in three ways: by the
generator matrix $G$, by parity check matrix $H$ or by Tanner graph $\Gamma(V,E)$. Parity check matrix $H$ for $[N, K]$ code is $R \times N$ matrix whose words
are zeros or ones. Rows of such matrix correspond to the parity
checks and the columns correspond to codeword bits. If bit number $j$ in the
codeword is checked by parity check number $i$ then in position
$(i , j)$ in matrix $H$ there is a $1$, if not there is a $0$.
Switching column does not change code properties and provides an
equivalent code. A simple example of linear error correcting code is Hamming code $[7,4]$ with matrix $H$ of the form:\\

\[
H=
 \begin{pmatrix}
1 & 1 & 1 & 0 & 1 & 0 & 0 \\
1 & 0 & 1 & 1 & 0 & 1 & 0 \\
1 & 1 & 0 & 1 & 0 & 0 & 1
\end{pmatrix}
\]
To encode vector of 4 bits by using $[7,4]$ Hamming code we add 3 extra bits.
Each bit is checked by a unique set of control equations. We assume that every codeword is from the set
\[C = \{y \in \mathbb F_2^N : Hy^T = 0\}.\]

Let's recall few simple facts from Graph Theory that can be found for example in \cite{biggs}. 
The \emph{distance} between vertices $v_1$ and $v_2$ of the graph is the length
of minimal path from $v_1$ and $v_2$. The graph is connected if for
arbitrary pair of vertices $v_1$, $v_2$ there is a path from $v_1$ to $v_2$. The
\emph{girth} of simple graph is the length of the shortest cycle in graph. A \emph{bipartite} graph is a graph
$\Gamma(V, E)$, in which a set of nodes $V$ can be divided into two subsets
$V = V_1 \cup V_2$ ($V_1 \cap V_2=\emptyset$) in such a way that no two vertices from each set $V_i$ ,
$i = 1,$ 2 are connected with an edge. We refer to bipartite graph $\Gamma(V, E)$
with partition sets $V_i$, $i = 1$, 2, $V = V_1 \cup V_2$ as bi-regular one if the
number of neighbours for representatives of each partition set are
constants $s$ and $r$ (bi-degrees); we say that we have bi-regularity $(s,r)$. We call the graph \emph{regular} in the
case $s = r$. 

 Tanner in 1981 introduced an effective graphical representation for LDPC Tanner codes, i.e. Tanner graph.
\textsl{Tanner graph} is the bipartite graph in which one subset $V_1$ corresponds
to the codeword bits and second $V_2$ to the parity checks. Vertex from the set
$V_1$ is connected to a vertex from the set $V_2$ if and only if a bit corresponding to
the vertex from $V_1$ is controlled by the parity check corresponding to the vertex
from $V_2$.

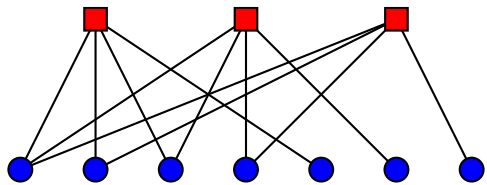
\begin{figure}[h!]
\begin{center}
\begin{tikzpicture}[style=thick,rotate=90,scale=0.50]
\foreach \x in {12}
\foreach \y in {2, 6, 10} {
\draw (4, \y) -- (0, \x);}
\draw (4, 10) -- (0, 10);
\draw (4, 2) -- (0, 10);
\draw (4, 10) -- (0, 8);
\draw (4, 6) -- (0, 8);
\draw (4, 6) -- (0, 6);
\draw (4, 2) -- (0, 6);
\draw (4, 10) -- (0, 4);
\draw (4, 6) -- (0, 2);
\draw (4, 2) -- (0, 0);
\foreach \x in {0, 2, 4, 6, 8, 10, 12} {
  \draw[fill=blue] (0, \x) circle (9pt);
}
\foreach \y in {2, 6, 10} {
  \draw[fill=red] (4 - 0.3, \y - 0.3) rectangle (4 + 0.3, \y + 0.3);
}
\end{tikzpicture}
\caption{Tanner graph for $[7,4]$ Hamming code}
\end{center}
\end{figure}

Tanner graph represents parity check equations.
There is a standard way to create error correcting code codes depending on adjacency matrix of bipartite, bi-regular graph. 
Parity check matrix $H$ is a part of the adjacency matrix $A$ of the graph with desired properties
used to create the code:
\[
A=
 \begin{pmatrix}
0 & H \\
H^T & 0 \\
\end{pmatrix}
\]
Determination of the matrix $H$ is equivalent to code designation. However parity check matrix is not unique.
Switching
columns doesn't change code properties and gives us an equivalent code.

One famous class of error correcting codes is Low Density Parity Check Codes introduced in 1962 by
Robert G. Gallager \cite{gallangerr:62}. Such codes have a wide range for
selection of parameters, enabling creation of codes with a
large block size and excellent correction properties. In the
present work we show new results on applications of Computer
Algebra and Theory of Algebraic Graphs in constructions of new
LDPC error correcting codes. Only specific graphs are suitable for
such purpose. Usually, simple graphs are used. The graph should be
bipartite, sparse, without small cycles and bi-regular or regular with
the possibility to obtain bi-regularity. Codes related to graph
defined below have all such important properties.

A code with a representation as a sparse matrix or a sparse Tanner
graph is a Low-Density Parity-Check Code, \cite{gallangerr:62}. A matrix is sparse if number of ones in it is small compared to number of zeros.
Low Density Parity Check Codes have a very sparse parity check
matrix. A \emph{sparse} graph has a small number of edges in relation to the
number of vertices. A simple relationship describing the density of
the graph $\Gamma(V, E)$ is
\begin{equation}\label{density}
D=\frac{\Large2|E|}{\Large|V|(|V|-1)},
\end{equation}
where $|E|$ is the number of edges of graph $\Gamma$ and $|V|$ the number
of vertices.

\section{Description of the graph}

Let $q$ be a prime power and let $\mathbb F_{q^2}$ be the quadratic extension of $\mathbb F_q$. The family of graphs $F=F(\mathbb F_q,\mathbb F_{q^2})$ was introduced in \cite{woldar:03}.  In fact, the described graphs are affine part 
of generalized quadrangles.
The following representation can be found in \cite{skrytustimenko}, where they are denoted as $I4_q$.
Those graphs are bipartite with set of vertices $V=V_1\cup V_2$, where $V_1\cap V_2=\emptyset$.
They have girth at least 8  
and very different bi-regularity $(q,q^2)$.
Traditionally because of geometric construction, one partition set
$V_1 = P$ is called set of points and other $V_2 = L$ is called the set of
lines:
\[P=\{(a,b,c):a\in \mathbb F_{q}, b\in \mathbb F_{q^2}, c\in \mathbb F_{q}\},\]
\[L=\{(x,y,z):x\in \mathbb F_{q^2}, y\in \mathbb F_{q^2}, z\in \mathbb F_{q}\}.\]
Two types of brackets are used to distinguish points and lines.
Let $x \longrightarrow x^q$ be the
Frobenius automorphism of $\mathbb F_{q^2}$. 
We say point $(p)$ is \emph{incident} to line $[l]$ in graph $F(\mathbb F_q,\mathbb F_{q^2})$, and we
write $(p)I [l]$, if the following relations on their coordinates hold:
\begin{equation}\label{1}
\left\{
  \begin{array}{ll}
    y-b=ax\\
    z-c=ay+ay^q\\
  \end{array}
\right.
\end{equation}
The set of vertices is $V(F)=P\cup L$ and the set of edges consists of all pairs $((p),[l])$, for which $(p)I[l]$.
Because $a\in \mathbb F_{q}, b\in \mathbb F_{q^2}, c\in \mathbb F_{q}, x\in \mathbb F_{q^2}, y\in \mathbb F_{q^2}, z\in \mathbb F_{q}$ we have $|P|=q^4$ and $|L|=q^5$ ($|V(F)|=q^5+q^4=q^4(q+1)$). This is a family of simple graphs.
\begin{proposition}\label{prop1}
$F(\mathbb F_q,\mathbb F_{q^2})$ is a family of sparse graphs.
\end{proposition}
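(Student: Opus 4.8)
The plan is to make the word \emph{sparse} precise through the density function \eqref{density} and to show that $D\to 0$ as $q\to\infty$ (so that for all sufficiently large $q$ the graph has very few edges relative to its vertices). Since the excerpt already records $|V(F)|=q^4(q+1)$, the whole argument reduces to computing $|E|$, and for that I would read the two bi-degrees directly off the incidence relations \eqref{1}. Each edge is a pair $((p),[l])$ with $(p)I[l]$, so $|E|$ equals the number of points times the point-degree, and also the number of lines times the line-degree; computing either gives $|E|$, and computing both gives a free consistency check.

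First I would fix a point $(a,b,c)$ and count the incident lines $(x,y,z)$. From \eqref{1} we get $y=b+ax$ and $z=c+ay+ay^{q}$, so once $x\in\mathbb F_{q^2}$ is chosen freely both $y$ and $z$ are determined; hence every point has exactly $q^2$ neighbours. Next I would fix a line $(x,y,z)$ and count incident points $(a,b,c)$, solving \eqref{1} as $b=y-ax$ and $c=z-a\!\left(y+y^{q}\right)$ with $a\in\mathbb F_q$ free. The only step that is not purely formal is verifying the field memberships: $b\in\mathbb F_{q^2}$ is automatic, but one must check $c\in\mathbb F_q$, and this holds precisely because $y+y^{q}$ is the Frobenius trace of $y$ from $\mathbb F_{q^2}$ to $\mathbb F_q$, hence lies in $\mathbb F_q$, so $a(y+y^{q})\in\mathbb F_q$ and therefore $c\in\mathbb F_q$. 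Thus every line has exactly $q$ neighbours, confirming the stated bi-regularity $(q,q^2)$ and giving
\[
|E|=|P|\cdot q^{2}=q^{4}\cdot q^{2}=q^{6}=q^{5}\cdot q=|L|\cdot q .
\]

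Finally I would substitute into \eqref{density}, obtaining
\[
D=\frac{2q^{6}}{q^{4}(q+1)\bigl(q^{4}(q+1)-1\bigr)} .
\]
The denominator grows like $q^{10}$ while the numerator is of order $q^{6}$, so $D=O\!\left(q^{-4}\right)\to 0$ as $q\to\infty$, which is exactly the sparsity claim. The main (indeed only) obstacle is the Frobenius/trace observation ensuring $c\in\mathbb F_q$ in the line-incidence count; once that is settled, the edge count is immediate and the density estimate is elementary arithmetic.
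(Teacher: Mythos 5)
Your proof is correct and follows essentially the same route as the paper: fix a point, observe that choosing $x$ freely determines $y$ and $z$ from \eqref{1}, conclude each point has $q^2$ neighbours, so $|E|=q^4\cdot q^2=q^6$, and substitute into \eqref{density} to see the density vanishes as $q$ grows. Your additional line-degree count (each line has $q$ neighbours, justified by the trace fact $y+y^q\in\mathbb F_q$) is a sound consistency check that the paper omits, but it does not change the substance of the argument.
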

\begin{proof}
If we set point $(p)=(a,b,c)$ and $x$ coordinate of line $[l]$ ($x\in \mathbb F_{q^2}$) then we can calculate $y$ and $z$ from \ref{1}. There is just one solution. So it's easy to see that each point has exactly $q^2$ neighbours. We have $|P|=q^4$ so set of edges has $|E|=q^4\cdot q^2=q^6$ elements.
Using formula \ref{density} the density of described graphs is 
\[D=\dfrac{2q^6}{q^4(q+1)(q^4(q+1)-1)}=\dfrac{2q^2}{2q^6+2q^5+q^4-1}\approx\dfrac{2}{2q^4+2q^3+q^2}.\]
\end{proof}

Instead of using elements of fields $\mathbb F_{q^2}$ and $\mathbb F_q$ as coordinates, we propose to use two rings $\mathbb Z_{n^2}$, $\mathbb Z_n$ and modulo operations. In such case sets $P$ and $L$
for the graph $F(\mathbb Z_n,\mathbb Z_{n^2})$ are the following :
\[P=\{(a,b,c):a\in \mathbb Z_{n}, b\in \mathbb Z_{n^2}, c\in \mathbb Z_{n}\},\]
\[L=\{(x,y,z):x\in \mathbb Z_{n^2}, y\in \mathbb Z_{n^2}, z\in \mathbb Z_{n}\}.\]
We say point $(p)$ is \emph{incident} to line $[l]$, and we write $(p)I [l]$, if the
following relations on their coordinates hold:
\begin{equation}\label{2}
\left\{
  \begin{array}{ll}
    (y-b)\mod n^2=(ax)\mod n^2\\
    (z-c)\mod n=(ay+ay^n)\mod n\\
  \end{array}
\right.
\end{equation}
Graphs defined in terms of finite rings as coordinates are bipartite,  
have girth at least 6 (probably 8, but not tested) and bi-regularity $(n,n^2)$. In this case they are not affine part of generalized quadrangles. Set of lines $L$ has $n^5$ elements and $|P|=n^4$. Set of vertices has $V=n^4(n+1)$ elements and set of edges consists of $n^6$ elements. By analogy with \ref{prop1}, the density of graphs $F(\mathbb Z_n,\mathbb Z_{n^2})$ is \[\dfrac{2}{nq^4+nq^3+n^2}.\]

\begin{figure}[h!]
    \begin{center}
    \includegraphics[scale =.3]{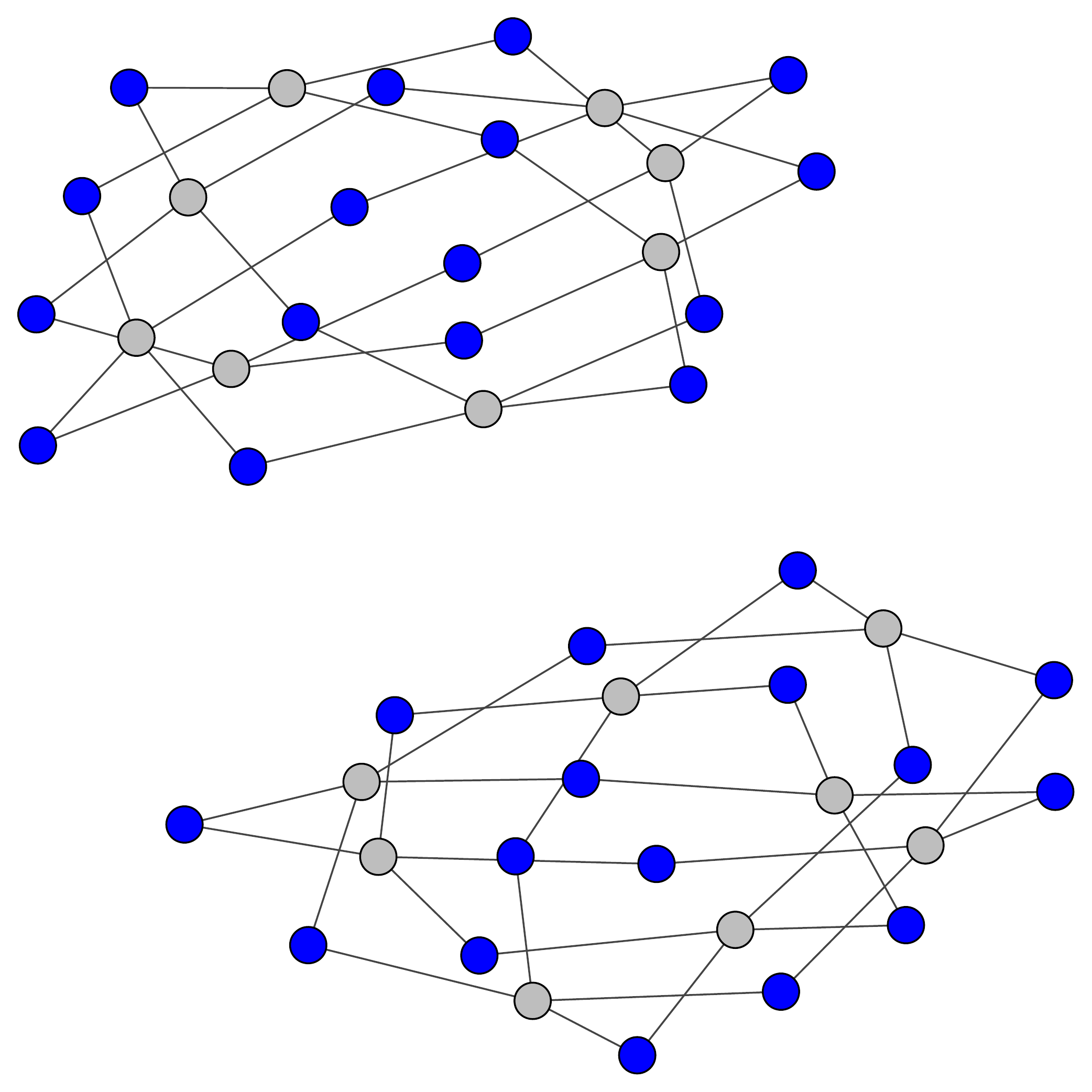}
    \caption{Graph $F(\mathbb Z_2,\mathbb Z_{4})$}
        \label{fig:0}
    \end{center}
\end{figure}

\begin{table}[h]
    \centering
    \caption{Properties of described graphs, for example number fields and finite rings}
    \vspace{0.1cm}
    \begin{tabular}{|c|cccc|}
    \hline
    Graph & Graph density & biregularity  & $|V|$ & $|E|$\\ \hline
    $F(\mathbb F_3,\mathbb F_9)$, $F(\mathbb Z_3,\mathbb Z_9)$    & $\approx 0.014$        & (3,9)  &324& 729   \\ 
    $F(\mathbb F_4,\mathbb F_{16})$, $F(\mathbb Z_4,\mathbb Z_{16})$    & $\approx 0.005$       & (4,16)  &1280& 4096     \\ 
    $F(\mathbb F_5,\mathbb F_{25})$, $F(\mathbb Z_5,\mathbb Z_{25})$    & $\approx 0.002$       & (5,25)  &3750&    15625   \\
	 $F(\mathbb Z_6,\mathbb Z_{36})$    & $\approx 0.001$       & (6,36)  &9072&    46656\\ 
    $F(\mathbb F_7,\mathbb F_{49})$, $F(\mathbb Z_7,\mathbb Z_{49})$    & $\approx 0.0006$   & (7,49)   &19208&  117649   \\ \hline
    \end{tabular}
    \label{tab:1}
\end{table}

Table \ref{tab:1} contains some properties of small representatives of described families. Figure \ref{fig:0} shows the smallest example: $F(\mathbb Z_2,\mathbb Z_{4})$ with $|P|=16$ (grey colour) and set of lines with $|L|=32$ elements. Each point has exactly 4 neighbours and each line has two neighbours. The advantages of using finite rings (and modulo operations) instead of prime powers is that graphs  $F(\mathbb Z_n,\mathbb Z_{n^2})$ are defined for all $n\geq2$.

\subsection{Code construction}


Our graphs $F$ are already biregular
but they also have a structure that enables us to remove points and lines
in such a way as to obtain a subgraph with different
bidegree. This operation yields a code with
better error correcting properties, but it reduces code rate and makes the code less convenient.
We can construct LDPC codes in two ways:
\begin{enumerate}
	\item In graphs $F(\mathbb F_q,\mathbb F_{q^2})$ or $F(\mathbb Z_n,\mathbb Z_{n^2})$z\ set of lines is bigger than set of points: $|L| > |P|$, so lines correspond to code words bits
and points correspond to parity checks. We decide to put one or zero in parity
check matrix by checking if relations ~\ref{1} between corresponding points and lines hold. 
Every bit of the codeword is checked by $q$ or $n$ parity checks. In this case parity check matrix $H$ is simply a part of adjacency matrix of used graph. $|L|=q^5$  and $|P|=q^4$ (or $|L|=n^5$ and $|P|=n^4$) so the size of $H$ is $q^4\times q^5$  ($n^4\times n^5$)  and \[R_C=\dfrac{q^5-q^4}{q^5}=\dfrac{q^5(q-1)}{q^5}=\dfrac{(q-1)}{q}\]
($R_C=\frac{n-1}{n}$).

\item Recalling that $L$ is the set of all lines and $P$ is the set of all points in
graph $F$, in order to obtain a bi-degree $(q, r )$ different from $(q, q^2)$, where $q<r<q^2$ we propose to put restrictions on the coordinates ($(n, r )$ different from $(n, n^2)$, where $n<r<n^2$) in the following way. Let $R \subset \mathbb F_{q^2}$ ($R \subset \mathbb Z_{n^2}$) be an $r$-element
subset and let $V_L$ be the set of lines in a new bipartite graph. This is the
following set:
\[V_L=\{[l]\in L| x\in R\},\]
where lines  $[l ]$  are represented by vectors $[x, y, z]$. Bi-degree reduction can only increase the
girth. After reduction, the bi-degree graph can be disconnected and
we use only one component to create a parity check matrix $H$.
In this case parity check matrix $H$ is a part of adjacency matrix of a subgraph of the used graph. $|V_L|=r\cdot q^3$  and $|P|=q^4$ (or $|V_L|=r\cdot n^3$ and $|P|=n^4$) so the size of $H$ is $q^4\times r\cdot n^3$  ($n^4\times r\cdot n^3$)  and 
\[R_C=\dfrac{r\cdot q^3-q^4}{r\cdot q^3}=1-\dfrac{q}{r}=\dfrac{(r-q)}{q}\]
($R_C=\frac{r-n}{n}$).

\end{enumerate}

\begin{table}[h]
    \centering
    \caption{Properties of described $[N,K]$ codes for example number fields and finite rings}
    \vspace{0.1cm}
    \begin{tabular}{|c|ccccc|}
    \hline
      Graph           & $N=|L|$ & $R=|P|$ & $K=N-R$ &$R_C$ & Girth  \\ \hline
		
 $F(\mathbb F_2,\mathbb F_4)$  & $2^5$ & $2^4$    &    16   & $0.5$ & $\geq8$  \\ 
	$F(\mathbb Z_2,\mathbb Z_4)$  & $2^5$ & $2^4$    &    16   & $0.5$ & $\geq6$  \\ 
 $F(\mathbb F_3,\mathbb F_9)$ & $3^5$ & $3^4$    &    162   & $\approx0.67$ & $\geq8$  \\ 
 $F(\mathbb Z_3,\mathbb Z_9)$ & $3^5$ & $3^4$    &    162   & $\approx0.67$ & $\geq6$  \\
 $F(\mathbb F_4,\mathbb F_{16})$  & $4^5$ & $4^4$    &    768   & $0.75$ & $\geq8$  \\
$F(\mathbb Z_4,\mathbb Z_{16})$ & $4^5$ & $4^4$    &    768   & $0.75$ & $\geq6$  \\
 $F(\mathbb F_5,\mathbb F_{25})$  & $5^5$ & $5^4$    &    2500   & $0.8$ & $\geq8$  \\ 
$F(\mathbb Z_5,\mathbb Z_{25})$ & $5^5$ & $5^4$    &    2500   & $0.8$ & $\geq6$  \\ 
$F(\mathbb Z_6,\mathbb Z_{36})$        & $6^5$ & $6^4$    &    6480   & $\approx0.83$ & $\geq6$  \\ 
	subgraph of $F(\mathbb Z_5,\mathbb Z_{25})$       &2000&625&1375&$\approx0.69$& $\geq6$ \\
  subgraph of $D(6,7)$                              & 2401&686&1715 & $\approx0.71$ & $\geq10$  \\ 
	subgraph of $D(8,5)$                              & 625 &250&375 & $0.6$ & $\geq12$  \\ 
	subgraph of $D(10,3)$                             & 243 &162&81 & $\approx0.33$ & $\geq14$  \\ \hline
		\end{tabular}
    \label{tab:2}
\end{table}

In first case $r=q^2$ and $s=q$ ($r=n^2$ and $s=n$). In the second  case  we choose  $r$ and $s=q$ ($s=n$).
In regular LDPC code every row has the same constant weight $r$
and every column has the same constant weight $s$. 
Despite the fact that graphs $F$ have very different bidegree, they give us regular LDPC codes. 
Every column of parity check matrix $H$ has the same weight.
\subsection{Corresponding LDPC codes}

Transmission quality depends mainly on code, on decoding algorithm
and level of noise in the communication channel. Code error
correcting properties are often tested by determining the
relationship between noise level and bit error rate. \emph{Bit error rate}
(BER) is the ratio of number of error bits to the total number of
transferred bits. The generated codes are based on the described graphs and  we performed computer simulation of a noisy channel. We encoded vectors, added White Gaussian Noise and decoded. The lack of short cycles guaranty convergence of the decoding algorithm. Presented LDPC codes work well with existing decoding algorithms, so there is no need do implement dedicated decoding technique.
Table \ref{tab:2} contains properties of described $[N,K]$ codes and other previously known codes in order to compare thein properties.
Fig. \ref{fig:1} and Fig. \ref{fig:2} present bit error rate for LDPC codes corresponding to described graphs. 
The code rates $R_C$ of such codes are bigger than $0.5$, so they are very convenient and have a 
small number of redundant bits. Tab. \ref{tab:2} contains properties of codes, for which results of the 
simulation are presented in this article, and parameters for a code based on graphs $D(n,q)$, $A(n,q)$ with similar code rate. 

Codes corresponding 
to graphs from special family $D(n,q)$  were presented in \cite{guinand2:97}. The presented codes have as good error correcting properties 
as codes constructed by Guinand and Lodge in \cite{guinand1:97}, whose construction is based on 
infinite family of graphs $D(n,q)$ constructed by Lazebnik and Ustimenko \cite{lazebnik:93}. 
In the case $q=p$ is a prime number the MATLAB code to generate LDPC codes is available in \cite{myPage}.

\begin{figure}[h]
    \begin{center}
    \includegraphics[scale =0.4,angle=-90]{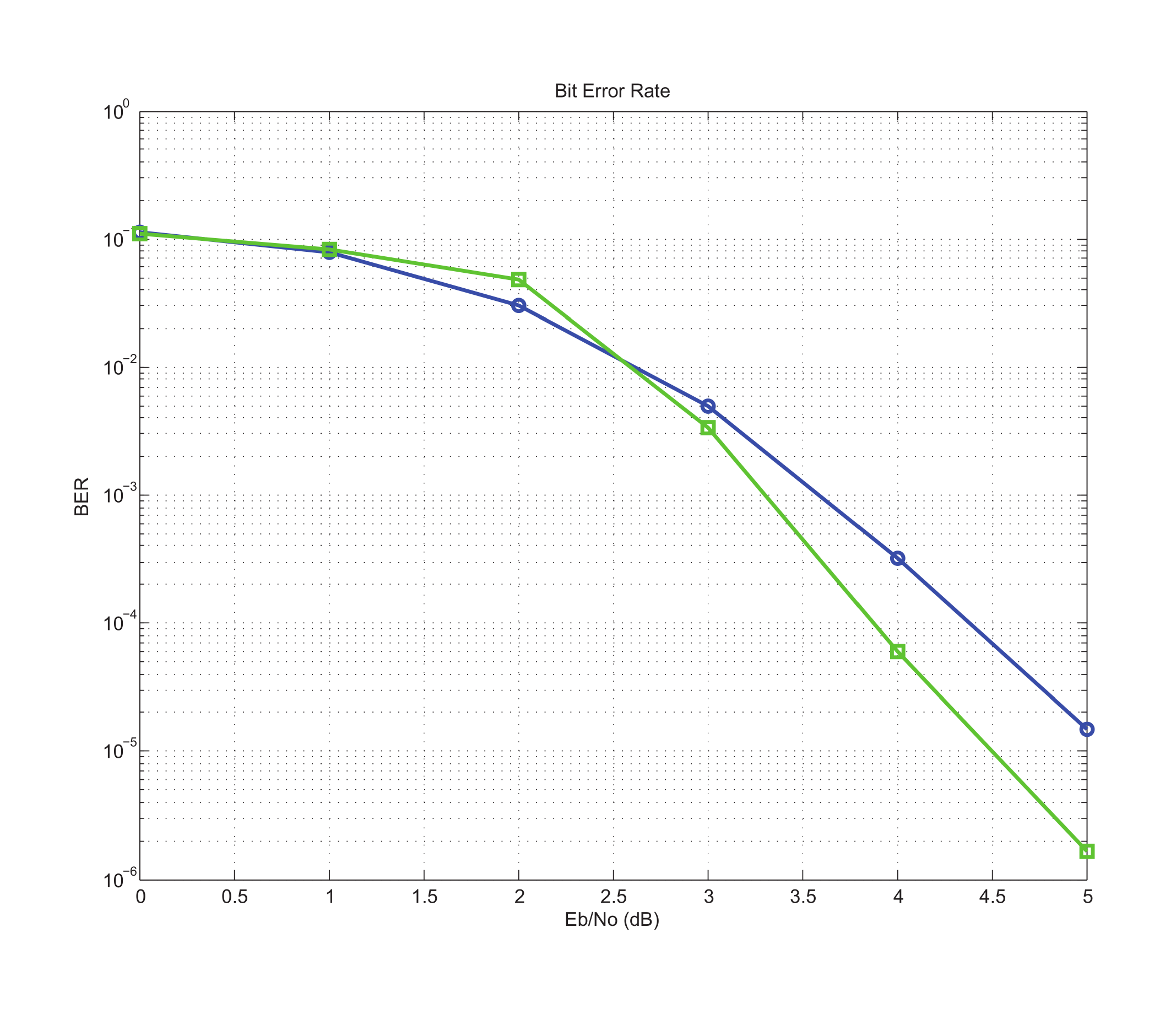}
    \caption{Bit error rate for $[243,162]$ code (blue) corresponding to graph $F(\mathbb F_3,\mathbb F_9)$ and $[1024,768]$ code (green) corresponding to graph $F(\mathbb F_4,\mathbb F_{16})$}
        \label{fig:1}
    \end{center}
\end{figure}

\begin{figure}[hb]
    \begin{center}
    \includegraphics[scale = .47,angle=-90]{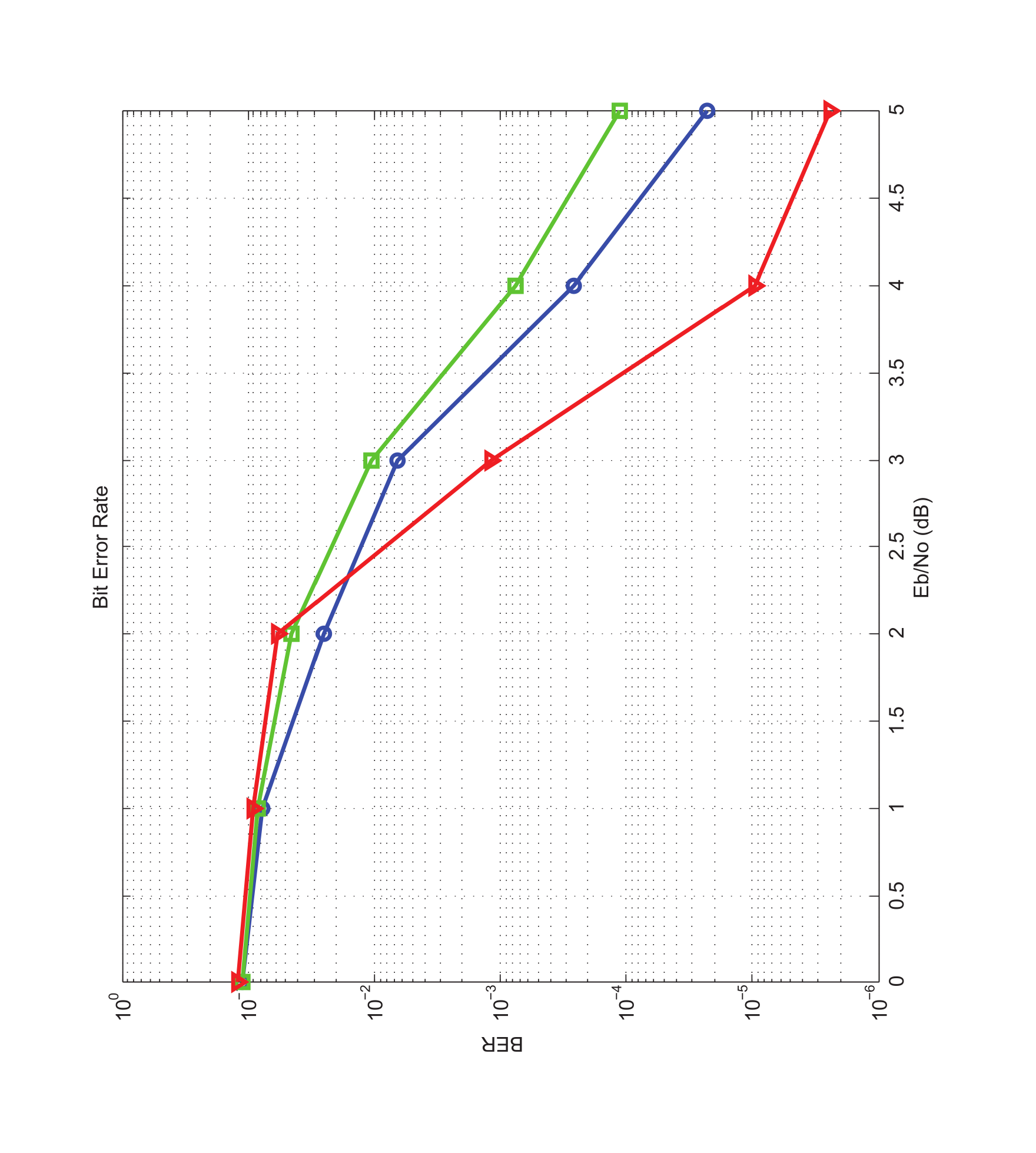}
    \caption{Bit error rate for $[243,162]$ code (blue) corresponding to graph $F(\mathbb Z_3,\mathbb Z_9)$, $[1024,768]$ code (green) corresponding to graph $F(\mathbb Z_4,\mathbb Z_{16})$ and $[2000,1375]$ code (red) corresponding to subgraph of $F(\mathbb Z_5,\mathbb Z_{25})$}
        \label{fig:2}
    \end{center}
\end{figure}

\clearpage

\begin{figure}[h!]
    \begin{center}
    \includegraphics[scale = .4]{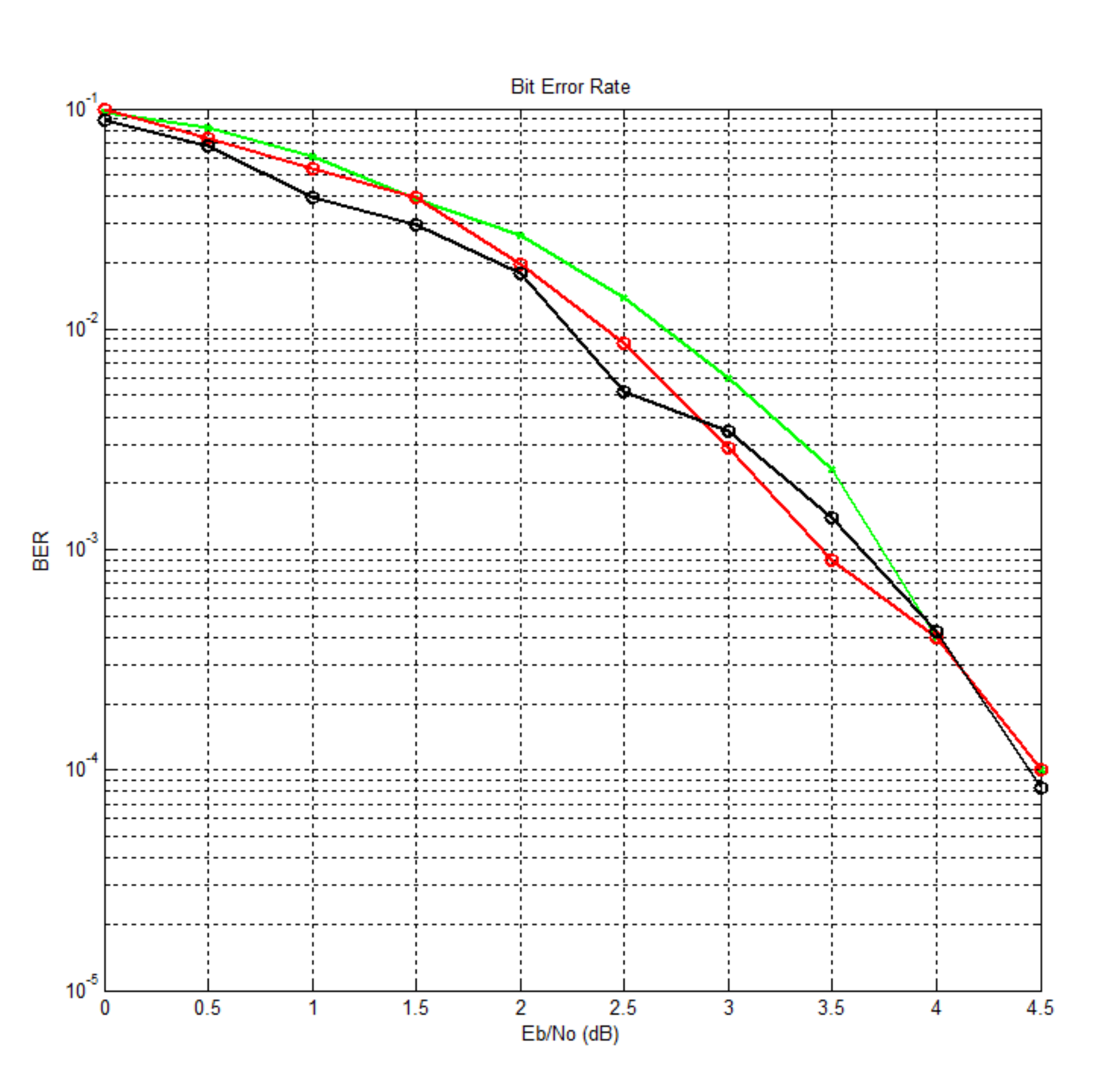}
    \caption{Bit error rate for $[2401,1715]$ code (green) corresponding to graph $D(6, 7)$,
$[625, 375]$ code (red) corresponding to graph $D(8, 5)$ and $[243, 81]$ code (black) corresponding to graph $D(10, 3)$}
        \label{fig:3}
    \end{center}
\end{figure}

Fig. \ref{fig:3} presents bit error rate for example LDPC codes corresponding to graphs $D(n,q)$. Green and red codes have similar code rate like codes based on graphs $F$. Comparing figures Fig. \ref{fig:1} and Fig. \ref{fig:2} with Fig. \ref{fig:3} we can see that our codes perform very well and have better error correcting properties.

\section{Conclusion}

The present work shows a modification of the family of graphs introduced by Ustimenko and Woldar by changing the 
finite field to a finite ring. LDPC codes are constructed based on the original family and on the modified one. 
An analysis is performed on such codes as well as a BER simulation. A comparison of the results with former graph 
based codes shows an improvement of the error correcting properties with the graphs introduced here.
It's also important to mention that they can easily be combined and work with existing decoding techniques.


\nocite{*}
\bibliographystyle{amsplain}

\bibliography{2016-4}{}




\end{document}